\ifdefined\lipics{}
\newtheorem{theorem}{Theorem}[section]
\newtheorem{lemma}[theorem]{Lemma}
\newtheorem{corollary}[theorem]{Corollary}
\newtheorem{claim}[theorem]{Claim}
\newtheorem{definition}[theorem]{Definition}
\newenvironment{lemma-repeat}[1]{\begin{trivlist}
\item[\hspace{\labelsep}{\bf\noindent Lemma \ref{#1} }]\em }%
{\end{trivlist}}
\newenvironment{theorem-repeat}[1]{\begin{trivlist}
\item[\hspace{\labelsep}{\bf\noindent Theorem \ref{#1} }]\em }%
{\end{trivlist}}
\newcommand*\samethanks[1][\value{footnote}]{\footnotemark[#1]}
\newcommand{\Gcomment}[1]{}
\newcommand{\Ecomment}[1]{}
\newcommand{\Rcomment}[1]{}
\newcommand{\Gcomment}[1]{{\color{red} Gregory: #1}}
\newcommand{\Ecomment}[1]{{\color{red} Guy: #1}}
\newcommand{\Rcomment}[1]{{\color{green} Ran: #1}}
\newcommand{\eps}{\varepsilon}
\newcommand{\congest}{\textsc{congest}\xspace}
\newcommand{\local}{\textsc{local}\xspace}
\newcommand{\WHVC}{\textsc{mwhvc}\xspace}
\newcommand{\MWHVC}{\textsc{mwhvc}\xspace}
\newcommand{\MWVC}{\textsc{mwvc}\xspace}
\newcommand{\opt}{\textsf{opt}}
\DeclareMathOperator*{\argmin}{argmin} 
\DeclareMathOperator*{\alg}{\textup{\MWHVC\xspace}} 
\DeclareMathOperator*{\deal}{deal} 
\newcommand{\parentheses}[1]{\left(#1\right)}
\newcommand{\logp}[1]{\log\parentheses{#1}}
\newcommand{\set}[1]{\left\{ #1 \right\}}
\newcommand{\problemtitle}[1]{\gdef\@problemtitle{#1}}
\newcommand{\probleminput}[1]{\gdef\@probleminput{#1}}
\newcommand{\problemoutput}[1]{\gdef\@problemoutput{#1}}
\newcommand{\problemobj}[1]{\gdef\@problemobj{#1}}
  \par\addvspace{.5\baselineskip}
  \par\addvspace{.5\baselineskip}
\title{Optimal Distributed Weighted Set Cover Approximation }
\ifdefined\lipics{}
\titlerunning{$(f+\epsilon)$-Approximation for Weighted Hypergraph Vertex Cover in $O\left(\log{\Delta}\right)$ Rounds}
\author{Ran Ben-Basat\thanks{Technion,  \texttt{sran@cs.technion.ac.il}}
\and Guy Even\thanks{Tel Aviv University, \texttt{guy@eng.tau.ac.il}}
			\and Ken-ichi Kawarabayashi \thanks{NII, Japan, \texttt{k\_keniti@nii.ac.jp}, \texttt{greg@nii.ac.jp}} \and Gregory Schwartzman\samethanks}
\ifdefined\lipics{}
\author{Ran Ben Basat}{Technion, Haifa, Israel}{sran@cs.technion.ac.il}{}{}
\author{Guy Even}{Tel Aviv University, Tel Aviv, Israel}{guy@eng.tau.ac.il}{}{}
\author{Ken-ichi Kawarabayashi}{NII, Tokyo, Japan}{k\_keniti@nii.ac.jp}{}{}
\author{Gregory Schwartzman}{NII, Tokyo, Japan}{greg@nii.ac.jp}{}{}
\authorrunning{Ran Ben Basat, Guy Even, Ken-ichi Kawarabayashi, and Gregory Schwartzman}
\subjclass{Theory of computation$\rightarrow$Distributed algorithms}
\keywords{Distributed Algorithms, Approximation Algorithms, Vertex Cover, Set~Cover}
\begin{document}
\maketitle

\begin{abstract}
We present a time-optimal deterministic distributed algorithm for approximating a minimum weight vertex cover in hypergraphs of rank $f$.
This problem is equivalent to the Minimum Weight Set Cover Problem in which the frequency of every element is bounded by $f$.
The approximation factor of our algorithm is $(f+\epsilon)$.
Let $\Delta$ denote the maximum degree in the hypergraph.
Our algorithm runs in the \congest model and requires $O(\log{\Delta} / \log \log \Delta)$ rounds, for constants $\epsilon \in (0,1]$ and $f\in\mathbb N^+$.
This is the first distributed algorithm for this problem whose running time does not depend on the vertex weights or the number of vertices. Thus adding another member to the exclusive family of \emph{provably optimal} distributed algorithms.

For constant values of $f$ and $\epsilon$, our algorithm improves over the $(f+\epsilon)$-approximation algorithm of \cite{KuhnMW06} whose running time is $O(\log \Delta + \log W)$, where $W$ is the ratio between the largest and smallest vertex weights in the graph. 

\end{abstract}
\thispagestyle{empty}
\newpage
\setcounter{page}{1}

\section{Introduction}
In the Minimum Weight Vertex Cover (\MWVC) problem, we are given an undirected graph $G=(V,E)$ with vertex weights $w:V\to \set{1,\ldots,W}$, for $W=n^{O(1)}$. The goal is to find a minimum weight \emph{cover} $U\subseteq V$ such that $\forall e\in E: e\cap U\neq\emptyset$. This problem is one of the classical NP-hard \mbox{problems presented in~\cite{Karp72}.}

In this paper, we consider the Minimum Weight Hypergraph Vertex Cover (\MWHVC) problem, a generalization of the \MWVC problem to \emph{hypergraphs} of rank $f$. In a hypergraph, $G=(V,E)$, each edge is a nonempty subset of the vertices. A hypergraph is of rank $f$ if the size of every hyperedge is bounded by $f$. 
The \MWVC problem naturally extends to \MWHVC using the above definition. Note that \MWHVC is equivalent to the Minimum Weight Set Cover problem with element frequencies bounded by $f$.

We consider the \MWHVC problem in the distributed setting, where the communication network is a bipartite graph $H(E\cup V,\set{\set{e,v} \mid v\in e})$. We refer to the network vertices as \emph{nodes} and network edges as \emph{links}. The nodes
of the network are the hypergraph vertices on one side and hyperedges on the other side. There is a network link between vertex $v\in V$ and hyperedge $e\in E$ iff $v\in e$. The computation is performed in synchronous rounds, where messages are sent between neighbors in the communication network. As for message size, we consider the \congest model where message sizes are bounded to $O(\log|V|)$. This is more
restrictive than the \local model \mbox{where message sizes are unbounded.}

Denoting by $\Delta$ the maximum vertex degree in $G$, any distributed constant-factor approximation algorithm requires
$\Omega(\log\Delta / \log\log \Delta)$ rounds to terminate, even for unweighted graphs and $f=2$~\cite{KuhnMW16}. Two results match the lower bound. For the Minimum Weight Vertex Cover Problem in graphs $(f=2)$, the lower bound was matched by~\cite{Bar-YehudaCS17} with a $(2+\epsilon)$-approximation algorithm (BCS algorithm) with optimal round complexity for every $\epsilon=\Omega(\log\log\Delta / \log \Delta)$.  \footnote{Recently,   the range of $\epsilon$ for which the runtime is optimal was improved to $\Omega(\log^{-c}\Delta)$ for any  $c=O(1)$~\cite{BEKS18}.} The progress of the BCS algorithm is analyzed via a trade-off between reducing the degree of the vertices and reducing the weight of the vertices. We do not know how to generalize the BCS algorithm and its analysis to hypergraphs. For the Minimum Cardinality Vertex Cover in Hypergraphs Problem, the lower bound was matched by ~\cite{unweightedHVC} with an $(f+\eps)$-approximation algorithm.  The round complexity in~\cite{unweightedHVC} is
$O\parentheses{f/\epsilon \cdot \frac{\log(f\cdot\Delta)}{\log\log(f\cdot\Delta)}}$, which is optimal for constant $f$ and $\eps$. The algorithm in~\cite{unweightedHVC} and its analysis is a deterministic version of the maximal independent set algorithm
of~\cite{ghaffari2016improved}. We do not know how to generalize the algorithm in~\cite{unweightedHVC} and its analysis to hypergraphs with vertex weights.

In this paper, we present a deterministic distributed $(f+\epsilon)$-approximation algorithm for minimum weight vertex cover
in $f$-rank hypergraphs, which completes in
$O(\log{\Delta} / \log \log \Delta)$ rounds in the \congest model, for any constants $\epsilon \in (0,1)$ and $f\in\mathbb N^+$. Our running
time is optimal according to a lower bound by
\cite{KuhnMW16}.%
\footnote{The dependence of the round complexity of our algorithm on $\epsilon$ and  $f$ is given by $O\parentheses{
    \frac{\log \Delta}{\log\log\Delta}
    +
    \parentheses{\frac{f^2}{\eps}}^{1/\gamma} \cdot \log\log \Delta
    }
  $, for every constant $\gamma\in(0,1)$ (Theorem~\ref{thm:refined}).
  }
This is the first distributed algorithm for this problem whose round complexity does not depend on the node weights. For constant values of $f$, Astrand et al.~\cite{AstrandS10} present an $f$-approximation algorithm whose running time is $O(\Delta^2 + \Delta\cdot\log^* W)$ ($W$ is the ratio between the largest and smallest weights in the graph).  Kuhn et al.~\cite{Kuhn2005,KuhnMW06} present an $(f+\epsilon)$-approximation algorithm that terminates in $O(\log \Delta + \log W)$ rounds. To the best of our knowledge, these are the only works that deal with the Minimum Weight Hypergraph Vertex Cover Problem (\MWHVC) in the
distributed setting.

Our algorithm is one of a handful of distributed algorithms for \emph{local} problems which are \emph{provably optimal} \cite{Bar-YehudaCGS17, Bar-YehudaCS17, ChangKP16, ColeV86, GhaffariS17, unweightedHVC}. Among these are the classic Cole-Vishkin algorithm \cite{ColeV86} for 3-coloring a ring, the more recent results of \cite{Bar-YehudaCGS17} and \cite{Bar-YehudaCS17} for \MWVC and Maximum Matching, and the very recent result of \cite{unweightedHVC} for Minimum Cardinality Hypergraph Vertex Cover.

\subsection{Tools and techniques} 
Our solution employs the Primal-Dual schema. The Primal-Dual approach introduces, for every hyperedge $e \in E$, a dual variable denoted by $\delta(e)$. The dual edge packing constraints are $\forall v\in V,\sum_{v \in e} \delta(e) \leq w(v)$. If for some $\beta \in [0,1)$ it holds that $\sum_{v \in e} \delta(e) \geq (1-\beta)\cdot w(v)$, we say the $v$ is $\beta$-tight. Let $\beta=\eps/(f+\eps)$. For every feasible dual solution, the weight of the set of $\beta$-tight vertices is at most $(f+\eps)$ times the weight of an optimal (fractional) solution. The algorithm terminates when the set of $\beta$-tight edges constitutes a vertex cover.

The challenge in designing a distributed algorithm is in controlling the rate at which we increase the dual variables. On the one hand, we must grow them rapidly to reduce the number of communication rounds. On the other hand, we may not violate the edge packing constraints. 
The algorithm proceeds in iterations, each of which requires a constant number of communication rounds. We initialize the dual variables in a "safe" way so that feasibility is guaranteed. We refer to the additive increase of the dual variable $\delta(e)$ in iteration $i$ by $\deal_i(e)$. Loosely speaking, the algorithm  increases the increments $\deal_i(e)$ exponentially (multiplication by $\alpha$) provided that no vertex $v\in e$ is $(\beta / \alpha)$-tight with respect to the deals of the previous iteration. Otherwise, the increment $\deal_i(e)$ equals the previous increment $\deal_{i-1}(e)$. 
The analysis builds on two observations: (1)~The number of times that the increment $\deal(e)$ is multiplied by $\alpha$ is bounded by $\log_\alpha \Delta$. (2)~The number of iterations in which a vertex is $(\beta/\alpha)$-tight with respect to the deals of the previous iteration is at most $\alpha/\beta$. Hence the total number of iterations is bounded by $\log_\alpha \Delta+f\cdot \alpha/\beta$. Setting $\alpha=\log \Delta/\log\log \Delta$ implies that the number of iterations is  $O(\log \Delta / \log \log \Delta)$.

\section{Problem Formulation}
Let $G=(V,E)$ denote a hypergraph. Vertices in $V$ are equipped with nonnegative weights $w(v)$.  For a subset $U\subseteq V$, let $w(U)\triangleq \sum_{v\in U} w(v)$.  Let $E(U)$ denote the set of hyperedges that are incident to some vertex in $U$ (i.e., $E(U)\triangleq \{ e\in E \mid e\cap U \neq \emptyset\}$).

\medskip
\noindent
The \emph{Minimum Weight Hypergraph Vertex Cover} Problem (\WHVC) is
defined as follows.
\begin{problem}
  \probleminput{Hypergraph $G=(V,E)$ with vertex weights $w(v)$.}%
  \problemoutput{A subset $C\subseteq V$ such that $E(C)=E$. }%
  \problemobj{Minimize $w(C)$.}
\end{problem}
 
The \WHVC Problem is equivalent to the Weighted Set Cover Problem.
Consider a set system $(X,\mathcal{U})$, where $X$ denotes a set of elements and $\mathcal{U}=\set{U_1,\ldots, U_m}$ denotes a collection of subsets of $X$. 
The reduction from the set system $(X,\mathcal{U})$ to a hypergraph $G=(V,E)$ proceeds as follows.
The set of vertices is $V\triangleq \set{u_1,\ldots, u_m}$ (one vertex $u_i$ per subset $U_i$).
The set of edges is $E\triangleq\set{e_x}_{x\in X}$ (one hyperedge $e_x$ per element $x$), where $e_x\triangleq \set{u_i: x\in U_i}$.
The weight of vertex $u_i$ equals the weight of the subset $U_i$.

\section{Distributed $(f+\eps)$-Approximation Algorithm for {\small MWHVC}}
\label{sec: alg}
\subsection{Input}
The input is a hypergraph $G=(V,E)$ with non-negative vertex weights $w:V\rightarrow \mathbb{R}^+$ and an approximation ratio parameter $\eps\in (0,1]$.  
We denote the rank of $G$ by $f$ (i.e., each hyperedge contains at most $f$ vertices) and the maximum degree of $G$ by $\Delta$ (i.e., each vertex belongs to at most $\Delta$ edges).

\paragraph{Assumptions.}
We assume that 
\begin{enumerate*}[label={(\roman*)}]
\item Vertex weights are polynomial in $n=|V|$ so that sending a vertex weight requires $O(\log n)$ bits. 
\item Vertex degrees are polynomial in $n$ (i.e., $|E(v)|=n^{O(1)}$) so that sending a vertex degree requires $O(\log n)$ bits. 
Since $|E(v)|\leq n^f$, this assumption trivially holds for constant $f$.
\item The maximum degree is at least $3$ so that $\log \log \Delta >0$.
\end{enumerate*}

\subsection{Output}
A vertex cover $C\subseteq V$. Namely, for every hyperedge $e\in E$, the intersection $e\cap C$ is not empty.  The set $C$ is maintained locally in the sense that every vertex $v$ knows whether it belongs to $C$ or not.

\subsection{Communication Network} 
The communication network $N(E\cup V,\set{\set{e,v} \mid v\in e})$ is a bipartite graph.  There are two types of nodes in the network: \emph{servers} and \emph{clients}. The set of servers is $V$ (the vertex set of $G$) and the set of clients is $E$ (the hyperedges in $G$).  There is a link $(v,e)$ from server $v\in V$ to a client $e\in E$ if $v\in e$.  We note that the degree of the clients is
bounded by $f$ and the degree of the servers is bounded by $\Delta$.

\subsection{Parameters and Variables}
\begin{itemize}
\item The approximation factor parameter $\eps\in(0,1]$ and the rank $f$
  determine the parameter $\beta$ defined by
  $\beta\triangleq\epsilon/(f+\eps)$.
\item The parameter $\alpha$ is set to $\log \Delta/\log\log \Delta$
  and determines the factor by which ``deals'' are multiplied\footnote{ 
    For simplicity, we assume that $\Delta$ is known and that
    $\Delta\ge 3$.  The assumption that the maximal degree $\Delta$ is
    known to all vertices is not required. Instead, each hyperedge $e$
    can compute a local maximum degree $\Delta(e)$, where
    $\Delta(e)\triangleq\max_{u\in e}|E(u)|$.  The local maximum degree
    $\Delta(e)$ can be used instead of $\Delta$ to define local value
    of the multiplier $\alpha=\alpha(e)$.
}.
See Section~\ref{sec:refine} in the Appendix for a setting of $\alpha$ that reduces the dependency of the running time on $\eps $ and $f$.

\item We denote the dual variables at the end of iteration $i$ by
  $\delta_i(e)$ (see Appendix~\ref{sec:primal dual} for a description
  of the dual edge packing linear program).  The amount by which
  $\delta_i(e)$ is increased in iteration $i$ is denoted by
  $\deal_i(e)$. Namely, $\delta_i(e)=\sum_{j\leq i}\deal_j(e)$.
\end{itemize}

\subsection{Notation}
\begin{itemize}
\item We say that an edge $e$ is \emph{covered} by $C$ if
  $e\cap C\neq\emptyset$.
\item Let $E(v)\triangleq \{ e\in E\mid v\in e\}$ denote the set of
  hyperedges that contain $v$.
\item For every vertex $v$, the algorithm maintains a subset
  $E'(v)\subseteq E(v)$ that consists of the uncovered hyperedges in
  $E(v)$ (i.e., $E'(v)=\{e\in E(v)\mid e\cap C=\emptyset\}$).
\end{itemize}
\subsection{Algorithm {\sc MWHVC}}
\begin{enumerate}
\item Initialization. Set $C\gets \emptyset$. For every vertex $v$, set $E'(v)\gets E(v)$. 
\item Iteration $i=0$. The edge $e$ collects the  weight $w(v)$  and degree $|E(v)|$
  from every vertex $v\in e$, and sets:
  $\deal_0(e)=\beta\cdot \min_{v\in e}\{ w(v)/|E(v)|$\}. The value
  $\deal_0(e)$ is sent to every $v\in e$. The dual variable is updated
  $\delta_0(e)\gets \deal_0(e)$.
\item For $i=1$ to $\infty$ do:
  \begin{enumerate}
  \item\label{item:tight}
    Every vertex $v\not\in C$ checks if it is $\beta$-tight.  If
    $\sum_{e\in E(v)} \delta_{i-1}(e) \geq (1-\beta)\cdot w(v)$, then $v$
    joins the cover $C$, sends a message to every $e\in E'(v)$
    that $e$ is covered, and ($v$) terminates.
  \item For every uncovered edge $e$, if $e$ receives a message that it is
    covered, then it tells all its vertices that $e$ is covered, and $e$ terminates.
  \item For every vertex $v\notin C$, if it receives a message from $e$ that
    $e$ is covered, then $E'(v)\gets E'(v)\setminus \{e\}$.
    If $E'(v)=\emptyset$, then $v$ terminates  (without joining the cover).
  \item \label{item:good}
    For every vertex $v\notin C$, if
    $\sum_{e\in E'(v)} \deal_{i-1}(e) \leq \frac{\beta}{\alpha}\cdot
    w(v)$, then send the message ``raise'' to every $e\in E'(v)$, else
    send the message ``stuck'' to every $e\in E'(v)$.
  \item \label{item:update}
    For every uncovered edge $e$. If $e$
    received a ``stuck'' message then $\deal_i(e)\gets\deal_{i-1}(e)$,
    else (if all incoming messages are ``raise'')
    $\deal_i(e)\gets\alpha\cdot \deal_{i-1}(e)$. Send $\deal_i(e)$ to
    every $v\in e$, who updates $\delta_i(e)\gets \delta_{i-1}(e)+\deal_i(e)$.
  \end{enumerate}
\end{enumerate}

\paragraph*{Termination}
Every vertex $v$ terminates when either $v\in C$ or every edge $e\in E(v)$ is covered (i.e., $E'(v)=\emptyset$).
Every edge $e$ terminates when it is covered (i.e., $e\cap C\neq\emptyset$).  

\paragraph*{Execution in CONGEST.}
See Section~\ref{sec:congest} in the Appendix for a discussion of how
Algorithm \WHVC is executed in the \congest model.

\section{Algorithm Analysis}
\subsection{Approximation Ratio}
The following claim states that, in each iteration, the sum of the deals
of edges incident to a vertex $v$ is bounded by $\beta\cdot w(v)$.
\begin{claim}\label{claim:vault}
  If $v\not\in C$, then
  $\sum_{e\in E'(v)} \deal_i (e) \leq \beta\cdot w(v)$.
\end{claim}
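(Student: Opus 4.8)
The plan is to prove the claim by induction on the iteration index $i$, with the induction hypothesis that every vertex $v\notin C$ satisfies $\sum_{e\in E'(v)}\deal_{i-1}(e)\le\beta\cdot w(v)$. A structural fact I would rely on throughout is that $E'(v)$ is monotonically non-increasing across iterations: edges are removed from $E'(v)$ in step~(c) but are never added back, so the set of uncovered edges incident to $v$ in iteration $i$ is a subset of the corresponding set in iteration $i-1$. Since $C$ only grows and $v\notin C$ in iteration $i$, the vertex $v$ was also outside $C$ in all earlier iterations, so the hypothesis legitimately applies to $v$.

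For the base case $i=0$ I would unfold the initialization $\deal_0(e)=\beta\cdot\min_{u\in e}\{w(u)/|E(u)|\}$. For each edge $e\in E(v)$ the minimum over $u\in e$ is at most $w(v)/|E(v)|$, and summing over the at most $|E(v)|$ edges incident to $v$ gives $\sum_{e\in E(v)}\deal_0(e)\le |E(v)|\cdot\beta\cdot w(v)/|E(v)|=\beta\cdot w(v)$. Because $E'(v)\subseteq E(v)$, the claim follows for $i=0$.

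For the inductive step I would split according to the message $v$ sends in step~(d) of iteration $i$. If $v$ sends \emph{raise}, then by the test in step~(d) we have $\sum_{e\in E'(v)}\deal_{i-1}(e)\le\frac{\beta}{\alpha}\cdot w(v)$; since every edge satisfies $\deal_i(e)\le\alpha\cdot\deal_{i-1}(e)$ (it is either multiplied by $\alpha$ or, since $\alpha\ge 1$, left unchanged), multiplying the bound through by $\alpha$ yields $\sum_{e\in E'(v)}\deal_i(e)\le\beta\cdot w(v)$. If instead $v$ sends \emph{stuck}, then each $e\in E'(v)$ receives a \emph{stuck} message from $v$, so by step~(e) its deal is frozen, i.e.\ $\deal_i(e)=\deal_{i-1}(e)$. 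Combining this with the monotonicity of $E'(v)$ and the induction hypothesis gives $\sum_{e\in E'(v)}\deal_i(e)=\sum_{e\in E'(v)}\deal_{i-1}(e)\le\beta\cdot w(v)$.

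The step I would be most careful about is the \emph{stuck} case, which hinges on the observation that a single \emph{stuck} message from $v$ is enough to freeze an edge's deal: an edge is multiplied by $\alpha$ only when \emph{all} of its incident vertices send \emph{raise}. This asymmetry between the two messages, together with the shrinking of $E'(v)$, is precisely what lets the two cases close the induction. The \emph{raise} case is then essentially the definition of the threshold $\frac{\beta}{\alpha}$, which is chosen so that one multiplicative step by $\alpha$ lands exactly at $\beta\cdot w(v)$ and never overshoots.
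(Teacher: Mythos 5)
Your proof is correct and follows essentially the same induction as the paper: the same base case via $\deal_0(e)\leq\beta\cdot w(v)/|E(v)|$, and the same two-case inductive step, with the only cosmetic difference being that you split on the message $v$ sends (raise vs.\ stuck) while the paper splits on whether some incident deal was multiplied by $\alpha$ --- these are equivalent, and your phrasing arguably makes the role of the ``all vertices must say raise'' rule more explicit.
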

\begin{proof}
  The proof is by induction on $i$.  The induction basis, for $i=0$,
  holds because $\deal_0(e)\leq \beta\cdot w(v)/|E(v)|$ for every edge
  $e\in v$.   The induction step, for
  $i\geq 1$, considers two cases.  If $\deal_{i+1}(e)=\deal_i(e)$ for
  every $e\in E'(v)$, then the induction step follows from the
  induction hypothesis. If there exists an edge $e\in E'(v)$ such that
  $\deal_{i+1}(e)=\alpha\cdot\deal_i(e)$, then 
  Step~\ref{item:good} implies that
  $\sum_{e\in E'(v)} \deal_{i}(e)\leq \alpha \cdot \sum_{e\in E'(v)}
  \deal_{i-1}(e)\leq \beta\cdot w(v)$, as required.
\end{proof}

\medskip\noindent If an edge $e$ is covered in iteration $j$, then
$e$ terminates and $\delta_i(e)$ is not set for $i\geq j$. In this case, we define $\delta_i(e)=\delta_{j-1}(e)$, namely, the last value assigned to a dual variable. 
\begin{claim}\label{claim:feasible}
  For every $i\geq 0$ the dual variables $\delta_i(e)$ constitute a
  feasible edge packing. Namely,
  \begin{align*}
    \sum_{e\in E(v)} \delta_i(e) &\leq w(v) &\text{for every vertex $v\in V$,}\\
    \delta_i(e)&\geq 0&\text{for every edge $e\in E$.}
  \end{align*}
\end{claim}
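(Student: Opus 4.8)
The plan is to prove both assertions by induction on $i$, handling nonnegativity as a quick preliminary and devoting the real work to the packing inequality $\sum_{e\in E(v)}\delta_i(e)\le w(v)$. For nonnegativity I would first check that every deal is nonnegative: $\deal_0(e)=\beta\cdot\min_{v\in e}\{w(v)/|E(v)|\}\ge 0$ because the weights are nonnegative, and for $i\ge1$ Step~\ref{item:update} sets $\deal_i(e)$ to either $\deal_{i-1}(e)$ or $\alpha\cdot\deal_{i-1}(e)$ with $\alpha>0$. Hence $\delta_i(e)=\sum_{j\le i}\deal_j(e)\ge 0$ for every edge, and as a byproduct the sum $\sum_{e\in E(v)}\delta_i(e)$ is non-decreasing in $i$, matching the intuition that duals only grow until their edges are frozen.

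For the packing constraint the base case $i=0$ is a direct computation: since $\deal_0(e)\le\beta\cdot w(v)/|E(v)|$ for each $e\in E(v)$, summing over the $|E(v)|$ incident edges gives $\sum_{e\in E(v)}\delta_0(e)\le\beta\cdot w(v)\le w(v)$. For the inductive step ($i\ge1$) I would fix a vertex $v$ and split on whether any edge incident to $v$ has its dual strictly increased in iteration $i$. If none does --- because $v$ has already joined $C$, terminated, or all of its incident edges are covered and frozen --- then $\sum_{e\in E(v)}\delta_i(e)=\sum_{e\in E(v)}\delta_{i-1}(e)\le w(v)$ straight from the induction hypothesis. Otherwise some incident edge increases, which can only occur in Step~\ref{item:update} for an uncovered edge, and this forces $v\notin C$; that is, $v$ must have failed the $\beta$-tightness test in Step~\ref{item:tight}. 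This failure is precisely the upper bound $\sum_{e\in E(v)}\delta_{i-1}(e)<(1-\beta)\cdot w(v)$ that I need, while the total increase equals $\sum_{e\in E'(v)}\deal_i(e)$, which Claim~\ref{claim:vault} bounds by $\beta\cdot w(v)$. Adding the two slacks yields $\sum_{e\in E(v)}\delta_i(e)<(1-\beta)w(v)+\beta w(v)=w(v)$, closing the induction.

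The step I expect to be the main obstacle is the bookkeeping that justifies the clean decomposition $\sum_{e\in E(v)}\delta_i(e)=\sum_{e\in E(v)}\delta_{i-1}(e)+\sum_{e\in E'(v)}\deal_i(e)$. I must argue carefully that an edge contributes an increment exactly when it lies in the current uncovered set $E'(v)$, invoking the convention $\delta_i(e)=\delta_{j-1}(e)$ for an edge covered in iteration $j$, and that a vertex raises the deals of its edges only after surviving the tightness test of the \emph{same} iteration. In particular, the ordering of Steps~\ref{item:tight}--\ref{item:update} is what guarantees that the lower bound ``$v$ is not $\beta$-tight with respect to $\delta_{i-1}$'' and the increment bound from Claim~\ref{claim:vault} both refer to iteration $i$, so that the complementary $(1-\beta)$ and $\beta$ slacks compose exactly to $w(v)$.
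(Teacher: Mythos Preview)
Your proposal is correct and follows essentially the same argument as the paper: induction on $i$, with nonnegativity handled by observing that deals are nonnegative, and the packing constraint obtained by combining the bound $\sum_{e\in E(v)}\delta_{i-1}(e)<(1-\beta)w(v)$ from Step~\ref{item:tight} with the bound $\sum_{e\in E'(v)}\deal_i(e)\le\beta w(v)$ from Claim~\ref{claim:vault}. Your explicit case split on whether $v$ has already joined $C$ (or terminated) is a welcome clarification that the paper's proof leaves implicit.
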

\begin{proof}
  Nonnegativity follows from the initialization and the positive increases by deals.  The packing constraints are proved by induction
  on the number of iterations.  The induction basis, for $i=0$, holds because
  $\sum_{e\in E(v)} \delta_0(e) = \sum_{e\in E(v)} \deal_0(e) =\beta\cdot w(v)$. (Recall that $\beta=\eps/(f+\eps)<1$.)  The
  induction step is proved as follows. By Step~\ref{item:update}, if $e\in E'(v)$, then $\delta_{i}(e)=\delta_{i-1}(e)+\deal_i(e)$,
  otherwise $\delta_i(e)=\delta_{i-1}(e)$.  By Step~\ref{item:tight},
  $\sum_{e\in E(v)} \delta_{i-1}(e)< (1-\beta)\cdot w(v)$.  By Claim~\ref{claim:vault} in Appendix~\ref{sec:primal dual},
  $\sum_{e\in E'(v)} \deal_i(e)\leq \beta\cdot w(v)$, and the claim follows.
  \end{proof}

\medskip\noindent  Let $\opt$ denote the cost of an optimal (fractional) weighted
  vertex cover of $G$.
  \begin{corollary}
Upon termination, the approximation ratio of Algorithm \WHVC is $f+\eps$.
\end{corollary}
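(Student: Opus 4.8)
The plan is to combine the two structural facts already established about the dual solution produced by the algorithm: feasibility (Claim~\ref{claim:feasible}) and the notion of $\beta$-tightness that governs which vertices enter the cover. The overall strategy is the standard Primal-Dual charging argument. I would first argue that the set $C$ returned by the algorithm is indeed a feasible vertex cover, and then bound its weight against $\opt$ using weak LP duality, where $\opt$ is the optimal value of the (fractional) vertex cover LP, equivalently the optimal value of its dual edge-packing LP.

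\medskip\noindent
\textbf{Step 1: Feasibility of the cover.} I would first observe that upon termination every edge $e\in E$ is covered: a vertex terminates only if it joins $C$ or if all its incident edges are already covered, and an edge terminates only once it is covered. Hence when the algorithm halts, $E(C)=E$, so $C$ is a legal vertex cover. (This is where one must be careful that the algorithm actually terminates; but for the approximation statement it suffices to argue about the cover at termination, and termination/round-complexity is handled separately.)

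\medskip\noindent
\textbf{Step 2: Charging the weight of $C$.} The heart of the argument is the claim stated informally before Claim~\ref{claim:vault}: for any feasible dual solution, the total weight of the $\beta$-tight vertices is at most $(f+\eps)\cdot\opt$. Every vertex $v\in C$ joined $C$ because it became $\beta$-tight, i.e. $\sum_{e\in E(v)}\delta(e)\ge (1-\beta)\cdot w(v)$, where $\delta$ denotes the final dual values (which are feasible by Claim~\ref{claim:feasible}). I would therefore write
\begin{align*}
w(C)=\sum_{v\in C} w(v)
\le \sum_{v\in C}\frac{1}{1-\beta}\sum_{e\in E(v)}\delta(e)
\le \frac{1}{1-\beta}\sum_{e\in E}\size{e}\cdot\delta(e)
\le \frac{f}{1-\beta}\sum_{e\in E}\delta(e).
\end{align*}
The first inequality uses $\beta$-tightness of each $v\in C$; the second swaps the order of summation and bounds, for each edge $e$, the number of cover-vertices containing $e$ by $\size{e}$; the third uses $\size{e}\le f$ (rank $f$) together with nonnegativity of $\delta$. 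By weak duality, $\sum_{e\in E}\delta(e)\le \opt$ since $\delta$ is a feasible edge packing (Claim~\ref{claim:feasible}). Finally substituting $\beta=\eps/(f+\eps)$ gives $1-\beta=f/(f+\eps)$, so $f/(1-\beta)=f+\eps$, yielding $w(C)\le (f+\eps)\cdot\opt$.

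\medskip\noindent
The only genuinely delicate point is Step~2: one must ensure the dual values $\delta$ referenced in the tightness test are exactly the feasible ones guaranteed by Claim~\ref{claim:feasible}, and that the convention $\delta_i(e)=\delta_{j-1}(e)$ for edges covered in iteration $j$ does not disturb either the tightness inequality or feasibility. I expect this bookkeeping to be the main obstacle; the inequality chain itself is routine once the right final dual vector is fixed. All other steps are immediate from the stated claims.
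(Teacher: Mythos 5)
Your proof is correct and follows essentially the same route as the paper: the paper's proof simply notes that $C$ consists of $\beta$-tight vertices, invokes Claim~\ref{claim:ratio PCWHVC} (which contains exactly your inequality chain bounding the weight of the $\beta$-tight vertices by $\frac{f}{1-\beta}\sum_e\delta(e)\le(f+\eps)\cdot\opt$ via weak duality), and observes that $C$ is a cover at termination. You have merely inlined the proof of that claim rather than citing it.
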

\begin{proof}
  Throughout the algorithm, the set $C$ consists of $\beta$-tight
  vertices. By Claim~\ref{claim:ratio PCWHVC}, $w(C)\leq (f+\eps)\cdot \opt$. Upon
  termination, $C$ constitutes a vertex cover, and the corollary
  follows.
\end{proof}

\subsection{Communication Rounds Analysis}\label{sec:runtime}
In this section, we prove that the number of communication rounds of
Algorithm~\MWHVC{}\ is bounded by
$ O\parentheses{\frac{f^2\cdot \log \Delta}{\eps\cdot \log\log \Delta}}$.
It suffices to
bound the number of iterations because each iteration consists of a constant number of 
communication rounds.

\subsubsection{Raise or Stuck Iterations}

\begin{definition}
  An iteration $i\geq 1$ is an $e$-\emph{raise} iteration if $\deal_i(e)=\alpha\cdot\deal_{i-1}(e)$.
  An iteration $i$ is a $v$-\emph{stuck} iteration if
$v$ sent the message ``stuck'' in iteration $i$.
\end{definition}
Note that if iteration $i$ is a $v$-stuck iteration and $v\in e$, then $\deal_i(e)=\deal_{i-1}(e)$ and $i$ is not an $e$-raise iteration.

\medskip\noindent
We bound the number of $e$-raise iterations as~follows.
\begin{lemma}\label{lemma:regular passes}
The number of $e$-raise iterations is bounded by $\log_\alpha \Delta\;.$
\end{lemma}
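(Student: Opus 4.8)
The plan is to track the single quantity $\deal_i(e)$ and to exploit the fact that it is multiplied by $\alpha$ on exactly the $e$-raise iterations and left unchanged otherwise. Writing $k$ for the number of $e$-raise iterations that have occurred up through some iteration $i$, the update rule in Step~\ref{item:update} gives immediately that $\deal_i(e)=\alpha^{k}\cdot\deal_0(e)$. So bounding the number of raises is equivalent to bounding how large $\deal_i(e)$ can grow relative to its initial value $\deal_0(e)$, and the lemma will follow from a matching lower bound on $\deal_0(e)$ together with an upper bound on $\deal_i(e)$.

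First I would fix the vertex attaining the initial minimum, $v^\ast\triangleq\argmin_{v\in e}\{w(v)/|E(v)|\}$, so that by the initialization of iteration $0$ we have $\deal_0(e)=\beta\cdot w(v^\ast)/|E(v^\ast)|$. Since every vertex has degree at most $\Delta$, i.e.\ $|E(v^\ast)|\le \Delta$, this yields the lower bound $\deal_0(e)\ge \beta\cdot w(v^\ast)/\Delta$. For the upper bound I would invoke Claim~\ref{claim:vault}: as long as $e$ is still uncovered, $v^\ast$ has not joined $C$ (otherwise $e$ would be covered) and $e\in E'(v^\ast)$, so applying the claim to $v^\ast$ and keeping only the single term for $e$ gives $\deal_i(e)\le \sum_{e'\in E'(v^\ast)}\deal_i(e')\le \beta\cdot w(v^\ast)$. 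Note that $e$-raise iterations can only occur while $e$ is uncovered, so this upper bound is valid at every iteration we need to count.

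Combining the two bounds, after $k$ raise iterations we have
\[
  \alpha^{k}\cdot\frac{\beta\cdot w(v^\ast)}{\Delta}\;\le\;\alpha^{k}\cdot\deal_0(e)\;=\;\deal_i(e)\;\le\;\beta\cdot w(v^\ast),
\]
which simplifies to $\alpha^{k}\le\Delta$, i.e.\ $k\le\log_\alpha\Delta$, as claimed. The only delicate point — and the part I would state carefully — is the justification that Claim~\ref{claim:vault} may be applied to $v^\ast$ precisely on the iterations being counted; this reduces to the observation already recorded after the definition, that a raise for $e$ requires $e$ to be uncovered, which in turn forces $v^\ast\notin C$ and $e\in E'(v^\ast)$. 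Everything else is the routine arithmetic above.
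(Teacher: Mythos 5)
Your proposal is correct and follows essentially the same route as the paper's proof: identify the minimum-normalized-weight vertex $v^\ast$, lower-bound $\deal_0(e)$ by $\beta\cdot w(v^\ast)/\Delta$, upper-bound $\deal_i(e)$ by $\beta\cdot w(v^\ast)$ via Claim~\ref{claim:vault}, and conclude that at most $\log_\alpha\Delta$ multiplications by $\alpha$ can occur. The only difference is that you spell out the arithmetic and the justification for applying Claim~\ref{claim:vault} to $v^\ast$ (namely that a raise requires $e$ uncovered, hence $v^\ast\notin C$ and $e\in E'(v^\ast)$), which the paper leaves implicit.
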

\begin{proof}
  Let $v^*$ denote a vertex with minimum normalized weight in $e$.
  The first deal satisfies
  $\deal_0(e)= \beta\cdot w(v^*)/|E(v^*)| \geq \beta\cdot
  w(v^*)/\Delta$.  By Claim~\ref{claim:vault},
   $\deal_i(e)\leq \beta\cdot w(v^*)$. Since the deal is multiplied by
  $\alpha$ in each $e$-raise iteration, the lemma follows.
\end{proof}
\medskip\noindent
We bound the number of $v$-stuck as~follows.
\begin{lemma}\label{lemma:good}
The number of $v$-stuck iterations is bounded by $\frac{\alpha}{\beta}\;.$
\end{lemma}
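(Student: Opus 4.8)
The plan is to charge each $v$-stuck iteration to a fixed amount of growth in the dual sum $\sum_{e\in E(v)}\delta_i(e)$, and then to cap the total growth using feasibility (Claim~\ref{claim:feasible}). First I would unpack the definition: iteration $i$ is $v$-stuck exactly when $v$ sends ``stuck'' in Step~\ref{item:good}, which by the rule of that step means that the stuck condition $\sum_{e\in E'(v)}\deal_{i-1}(e) > \frac{\beta}{\alpha}\cdot w(v)$ holds.

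The key observation is that a $v$-stuck iteration leaves the deals of $v$'s uncovered edges unchanged but still grows their dual variables by the \emph{full} previous-iteration amount. Indeed, since $v$ sends ``stuck'' to every $e\in E'(v)$, no such edge receives only ``raise'' messages, so by Step~\ref{item:update} we have $\deal_i(e)=\deal_{i-1}(e)$ for every $e\in E'(v)$. Nonetheless Step~\ref{item:update} sets $\delta_i(e)=\delta_{i-1}(e)+\deal_i(e)$ for these edges, while the covered edges in $E(v)\setminus E'(v)$ keep their frozen value. Hence the increase of the dual sum in a $v$-stuck iteration is
\[
\sum_{e\in E(v)}\delta_i(e)-\sum_{e\in E(v)}\delta_{i-1}(e)=\sum_{e\in E'(v)}\deal_i(e)=\sum_{e\in E'(v)}\deal_{i-1}(e)>\frac{\beta}{\alpha}\cdot w(v),
\]
where the final inequality is precisely the stuck condition.

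Finally I would combine this per-iteration growth with a global cap. The dual sum $\sum_{e\in E(v)}\delta_i(e)$ is nonnegative and nondecreasing in $i$ (deals are positive), and by Claim~\ref{claim:feasible} it never exceeds $w(v)$. Since every $v$-stuck iteration raises it by strictly more than $\frac{\beta}{\alpha}\cdot w(v)$ and the increments across iterations cannot cancel, the number of such iterations is strictly less than $\frac{w(v)}{(\beta/\alpha)\,w(v)}=\frac{\alpha}{\beta}$, which yields the claimed bound.

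I do not expect a serious obstacle here; the only point requiring care is the bookkeeping in the second step, namely confirming that in a $v$-stuck iteration the contribution to the dual sum comes entirely from the edges in $E'(v)$ and equals the previous-iteration deal sum, rather than being diluted by covered edges or by edges whose deal was multiplied by $\alpha$. A $v$-stuck iteration forces every $e\in E'(v)$ to retain its deal, so this accounting is clean.
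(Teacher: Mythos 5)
Your proof is correct and follows essentially the same argument as the paper: each $v$-stuck iteration forces the dual sum $\sum_{e\in E(v)}\delta(e)$ to grow by more than $\frac{\beta}{\alpha}\cdot w(v)$, and Claim~\ref{claim:feasible} caps that sum at $w(v)$. The only (immaterial) difference is that you charge the increment made in iteration $i$ itself, whereas the paper charges the increment already made in iteration $i-1$; your bookkeeping of why the full previous deal sum is added is, if anything, slightly more explicit.
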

\begin{proof}
  Suppose that iteration $i$ (for $i\geq 1$) is a $v$-stuck iteration.
  This implies that $\sum_{e\in E'(v)}\deal_{i-1}(e)> \frac{\beta}{\alpha}\cdot w(v)$.
  Thus $\delta_{i-1}(e)-\delta_{i-2}(e)> \frac{\beta}{\alpha}\cdot w(v)$.
  Had there been more than $\frac{\alpha}{\beta}$ iterations that are $v$-stuck, then the dual variable $\delta(e)$ would be larger than $w(v)$, contradicting Claim~\ref{claim:feasible}. 
\end{proof}
\medskip\noindent

\subsubsection{Putting it Together}
\begin{theorem}\label{thm:iterations}
Fix some $\alpha>1$, the number of iterations of Algorithm~$\alg$
is
\begin{align*}
  O\parentheses{\log_\alpha \Delta + f\cdot \frac{\alpha}{\beta}}
\end{align*}
\end{theorem}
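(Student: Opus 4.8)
The plan is to bound, for an arbitrary fixed edge $e$, the number of iterations during which $e$ remains uncovered, and then to observe that this per-edge bound governs the running time of the whole algorithm. The key point is that the algorithm terminates once every edge is covered: an edge terminates as soon as it is covered, and a vertex terminates once it either joins $C$ or sees all of its incident edges covered. Hence, up to an additive constant, the total number of iterations equals the maximum over all edges $e$ of the ``lifetime'' of $e$, i.e.\ the last iteration in which $e$ is still uncovered.

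First I would split the iterations in which $e$ is active into two classes: $e$-raise iterations and all the rest. By Lemma~\ref{lemma:regular passes}, the number of $e$-raise iterations is at most $\log_\alpha \Delta$, so it remains to bound the non-raise iterations. For this I would invoke the remark following the definition of raise/stuck iterations: an active iteration fails to be an $e$-raise iteration only if $e$ received a ``stuck'' message in Step~\ref{item:update}, which happens only if some vertex $v \in e$ was $v$-stuck in that iteration. Thus every non-raise iteration during which $e$ is active can be charged to a $v$-stuck event of some $v \in e$.

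Next I would bound these charges. Since $e$ contains at most $f$ vertices and, by Lemma~\ref{lemma:good}, each vertex $v$ is $v$-stuck in at most $\alpha/\beta$ iterations over the entire execution, the total supply of $v$-stuck events available for charging, summed over $v \in e$, is at most $f \cdot \alpha/\beta$. Because each non-raise iteration consumes at least one such event, the number of non-raise iterations while $e$ is active is at most $f \cdot \alpha/\beta$. Adding the two contributions, $e$ stays uncovered for at most $\log_\alpha \Delta + f \cdot \alpha/\beta$ iterations, and taking the maximum over all edges yields the claimed bound $O\!\left(\log_\alpha \Delta + f \cdot \alpha/\beta\right)$.

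The main thing to get right is the reduction from the global iteration count to a per-edge lifetime bound, in particular the argument that once all edges are covered no node keeps the computation alive, together with the charging step that witnesses each non-raise iteration by a stuck vertex of $e$. The two lemmas supply all the quantitative content; the only obstacle is assembling them correctly and ensuring that no single $v$-stuck iteration is overcounted beyond the $\alpha/\beta$ budget guaranteed per vertex.
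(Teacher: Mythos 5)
Your proposal is correct and follows essentially the same route as the paper: fix an edge $e$, observe that every iteration in which $e$ is uncovered is either an $e$-raise iteration or a $v$-stuck iteration for some $v\in e$, and combine Lemmas~\ref{lemma:regular passes} and~\ref{lemma:good} to get the bound $\log_\alpha\Delta+f\cdot\alpha/\beta$ on the lifetime of $e$. The only difference is that you spell out more explicitly the reduction from the global iteration count to the per-edge lifetime and the charging of non-raise iterations to stuck events, which the paper leaves implicit.
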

\begin{proof}
Fix an edge $e$. We bound the number of iterations until $e$ is covered as follows. Every iteration is either an $e$-raise iteration or a $v$-stuck iteration for some $v\in e$. 
Since $e$ contains at most $f$ vertices, we conclude that the number of iterations is bounded by the number of $e$-stuck iterations plus the sum over $v\in e$ of the number of $v$-stuck iterations.  The theorem follows from Lemmas~\ref{lemma:regular passes} and~\ref{lemma:good}.
\end{proof}
\medskip\noindent
Finally, by setting $\alpha$ appropriately, we bound the running time as follows.
\begin{corollary}
If $\alpha=\log \Delta / \log \log \Delta$, then the round complexity of Algorithm \WHVC is $O\parentheses{\frac{f^2}{\eps}\cdot\frac{\log \Delta}{\log \log \Delta}}$.
\end{corollary}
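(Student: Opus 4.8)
The plan is to substitute $\alpha=\log\Delta/\log\log\Delta$ directly into the iteration bound $O\parentheses{\log_\alpha \Delta + f\cdot \alpha/\beta}$ supplied by Theorem~\ref{thm:iterations}, estimate the two summands separately, and then observe that the second term dominates. Before doing so I would check that this choice of $\alpha$ is admissible for the theorem, i.e.\ that $\alpha>1$: by the standing assumption $\Delta\geq 3$ we have $\log\log\Delta>0$ and $\log\Delta>\log\log\Delta$, so indeed $\alpha>1$ and Theorem~\ref{thm:iterations} applies.

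The first summand is $\log_\alpha\Delta=\frac{\log\Delta}{\log\alpha}$, so everything hinges on estimating $\log\alpha$. Writing $\log\alpha=\log\log\Delta-\log\log\log\Delta$, the key fact to establish is $\log\alpha=\Theta(\log\log\Delta)$. The upper bound $\log\alpha\leq\log\log\Delta$ is immediate, and it gives $\log_\alpha\Delta\geq\log\Delta/\log\log\Delta$; for the reverse direction one shows a lower bound of the form $\log\alpha\geq c\cdot\log\log\Delta$ for a constant $c$, which yields $\log_\alpha\Delta=O(\log\Delta/\log\log\Delta)$. Hence the first term contributes $O\parentheses{\frac{\log\Delta}{\log\log\Delta}}$.

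For the second summand I would plug in $\beta=\eps/(f+\eps)$, giving
\begin{align*}
f\cdot\frac{\alpha}{\beta}
= f\cdot\frac{f+\eps}{\eps}\cdot\frac{\log\Delta}{\log\log\Delta}.
\end{align*}
Since $\eps\in(0,1]$ we have $f+\eps\leq f+1=O(f)$, so this term is $O\parentheses{\frac{f^2}{\eps}\cdot\frac{\log\Delta}{\log\log\Delta}}$. Adding the two contributions, the second clearly dominates the first (the first lacks the $f^2/\eps$ factor, which is at least a constant), and we obtain the claimed bound $O\parentheses{\frac{f^2}{\eps}\cdot\frac{\log\Delta}{\log\log\Delta}}$.

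The only genuinely delicate point is the lower bound $\log\alpha=\Omega(\log\log\Delta)$, i.e.\ controlling the lower-order term $\log\log\log\Delta$. For large $\Delta$ this is routine since $\log\log\log\Delta=o(\log\log\Delta)$, but one should handle small values of $\Delta$ with care: when $\Delta$ is close to $3$ the quantity $\log\log\Delta$ can be less than $1$, making $\log\log\log\Delta$ nonpositive and thus only helping the lower bound, while for such bounded $\Delta$ the total number of iterations is in any case $O(1)$. I expect this case analysis on the size of $\Delta$ to be the main (though minor) obstacle; everything else is a direct substitution and comparison of the two terms.
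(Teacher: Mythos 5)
Your proof is correct and follows the same route the paper intends: the paper states this corollary as an immediate consequence of Theorem~\ref{thm:iterations} by substituting $\alpha=\log\Delta/\log\log\Delta$, which is exactly what you do. Your extra care about $\log\alpha=\Omega(\log\log\Delta)$ (and the small-$\Delta$ regime, where $\Delta\ge 3$ keeps $\alpha$ bounded away from $1$) is a valid filling-in of details the paper omits, and the observation that the term $f\cdot\alpha/\beta = O\parentheses{\frac{f^2}{\eps}\cdot\frac{\log\Delta}{\log\log\Delta}}$ dominates is right.
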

A refined assignment of $\alpha$ that leads to a reduced dependency of the running time on $\eps$ and $f$ is presented in Section~\ref{sec:refine} in the Appendix.

\bibliographystyle{alpha}
\bibliography{paper}
\appendix
\section{Primal-Dual Approach}
 \label{sec:primal dual}

\medskip
\noindent
The fractional LP relaxation of \WHVC\ is defined as follows.
\begin{alignat*}{2}\label{eq:P}
  & \text{minimize: }  \sum_{v\in V}w(v) \cdot x(v) &&\\
  & \text{subject to: }& \quad & \\
&\tag{$\mathcal{P}$}
  \begin{aligned}
                   \sum_{v\in e} x(v)  &\geq 1,& \forall e\in E\\
                   x(v) &\geq 0,&\forall v\in V
                 \end{aligned}
\end{alignat*}
The dual LP is an \emph{Edge Packing} problem defined as follows:
\begin{alignat*}{2}\label{eq:D}
  & \text{maximize: }  \sum_{e\in E} \delta(e)&&\\
  & \text{subject to: }& \quad &\\
&\tag{$\mathcal{D}$}
  \begin{aligned}
                   \sum_{e\ni v} \delta(e) &\leq w(v),& \forall v\in V\\
                   \delta(e) &\geq 0,&\forall e\in E
                 \end{aligned}
\end{alignat*}
The following claim is used for proving the approximation ratio of
the \WHVC algorithm.
\begin{claim}\label{claim:ratio PCWHVC}
  Let $\opt$ denote the value of an optimal fractional solution of the
  primal LP~\eqref{eq:P}. Let $\{\delta(e)\}_{e\in E}$ denote a
  feasible solution of the dual LP~\eqref{eq:D}.  Let $\eps\in(0,1)$
  and $\beta\triangleq \eps/(f+\eps)$. Define the $\beta$-tight 
 vertices by:
  \begin{align*}
    T_{\eps} &\triangleq \{ v\in V \mid \sum_{e\ni v} \delta(e)\geq (1-\beta)\cdot w(v)\}.
  \end{align*}
Then $w(T_{\eps})\leq (f+\eps) \cdot \opt$.
\end{claim}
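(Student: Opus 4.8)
The plan is to charge the weight of every tight vertex to the dual objective and then invoke weak LP duality. First I would use the defining inequality of $T_{\eps}$: every $v\in T_{\eps}$ satisfies $(1-\beta)\cdot w(v)\leq \sum_{e\ni v}\delta(e)$, and since $\beta<1$ this rearranges to $w(v)\leq \frac{1}{1-\beta}\sum_{e\ni v}\delta(e)$. Summing over all $v\in T_{\eps}$ gives $w(T_{\eps})\leq \frac{1}{1-\beta}\sum_{v\in T_{\eps}}\sum_{e\ni v}\delta(e)$.

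Next I would exchange the order of summation in the double sum, writing $\sum_{v\in T_{\eps}}\sum_{e\ni v}\delta(e)=\sum_{e\in E}\delta(e)\cdot\size{\set{v\in e : v\in T_{\eps}}}$. Here the rank assumption enters: since $G$ has rank $f$, we have $\size{\set{v\in e : v\in T_{\eps}}}\leq \size{e}\leq f$, and because $\delta(e)\geq 0$ this yields $\sum_{v\in T_{\eps}}\sum_{e\ni v}\delta(e)\leq f\cdot\sum_{e\in E}\delta(e)$. The third ingredient is weak duality: as $\set{\delta(e)}_{e\in E}$ is feasible for the dual edge-packing LP $\mathcal{D}$ and $\opt$ is the optimum of the primal LP $\mathcal{P}$, we have $\sum_{e\in E}\delta(e)\leq \opt$. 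Chaining the three bounds gives $w(T_{\eps})\leq \frac{f}{1-\beta}\cdot\opt$.

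Finally I would substitute $\beta=\eps/(f+\eps)$. Since $1-\beta=f/(f+\eps)$, the prefactor collapses to $\frac{f}{1-\beta}=f+\eps$, which proves $w(T_{\eps})\leq (f+\eps)\cdot\opt$.

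I do not anticipate a real obstacle in this argument; it is the standard primal-dual approximation-ratio calculation. The only two places that deserve care are the double-counting step, where the point is that the rank bound $\size{e}\leq f$ is exactly what caps how many tight vertices each dual variable $\delta(e)$ is charged to, and the final arithmetic identity $f/(1-\beta)=f+\eps$, which is precisely the relation the definition $\beta=\eps/(f+\eps)$ was chosen to satisfy.
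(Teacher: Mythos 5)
Your proof is correct and follows exactly the same route as the paper's: bound each tight vertex's weight via $\frac{1}{1-\beta}\sum_{e\ni v}\delta(e)$, swap the summation order and use the rank bound $|e|\leq f$, apply weak duality, and finish with the identity $f/(1-\beta)=f+\eps$. The only difference is that you spell out the double-counting step more explicitly than the paper does.
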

\begin{proof}
  \begin{align*}
    w(T_{\eps}) &=\sum_{v\in T_{\eps}} w(v)\\
                              &\leq \frac{1}{1-\beta} \cdot \parentheses{\sum_{v\in T_{\eps}} \sum_{e\ni v} \delta(e)}\\
                              &\leq \frac{f}{1-\beta} \sum_{e\in E} \delta(e) \leq (f+\eps)\cdot \opt.
  \end{align*}
  The last transition follows from $f/(1-\beta)=f+\eps$ and by  weak 
  duality. The claim follows.
\end{proof}
\section{Adaptation to the CONGEST model}\label{sec:congest}
To complete the discussion, we need to show that the message lengths
in Algorithm \WHVC are $O(\log n)$.
\begin{enumerate}
\item In round $0$, every vertex $v$ sends its weight $w(v)$ and
  degree $|E(v)|$ to every hyperedge in $e\in E(v)$. We assume that
  the weights and degrees are polynomial in $n$, hence the length of the binary
  representations of $w(v)$ and $|E(v)|$ is $O(\log n)$.
  
  Every hyperedge $e$ sends back to every $v\in e$ the pair $(w(v_e),|E(v_e)|)$,
  where $v_e$ has the smallest normalized weight, i.e.,
  $v_e=\argmin_{v\in e} \{w(v)/|E(v)|\}$.

  Every vertex $v\in e$ locally computes $\deal_0(e)=\beta \cdot w(v_e)/|E(v_e)|$
  and $\delta_0(e)=\deal_0(e)$.

\item In round $i\geq 1$, the following types of messages are sent:
  ``$e$ is covered'', ``raise'', or ``stuck''. These messages require
  only a constant number of bits. The decision whether
  $\deal_i(e)=\deal_{i-1}(e)$ or
  $\deal_i(e)=\alpha\cdot \deal_{i-1}(e)$ requires a single bit.
\item Finally, if $\alpha=\alpha(e)$ is set locally based on the local
  maximum degree $\max_{v\in e} |E(v)|$, then every vertex $v$ sends
  its degree to all the edges $e\in E(v)$. The local maximum degree
  for $e$ is sent to every vertex $v\in V$, and this parameter is used
  to compute $\alpha(e)$ locally.
\end{enumerate}
\section{Improved Running Time}\label{sec:refine}
In this section, we present a modified definition of the multiplier $\alpha$ that leads to an improved dependence of the running time on $f$ and $\epsilon$.

\medskip\noindent
Let $\gamma\in (0,1)$ denote a constant.
Set the multiplier $\alpha$ as follows:
\begin{align}\label{eq:alpha}
\alpha \triangleq 
\begin{cases}
\parentheses{
\frac{\log \Delta}{\log\log\Delta}}^{(1-\gamma)}
& \text{if } 
    \frac{f}{\beta}< \parentheses{\frac{\log \Delta}{\log\log \Delta}}^{\gamma}\\
    2 & \text{otherwise.}
\end{cases}
\end{align}

Note that in the following, the round complexity is monotonically nonincreasing in $\gamma$, so it may be chosen arbitrarily close to $1$.
\begin{theorem}\label{thm:refined}
For every constant $\gamma\in (1,0)$, by setting $\alpha$ according to Eq.~\ref{eq:alpha}, 
the round complexity of Algorithm \WHVC is bounded by
\begin{align}\label{eq:improved bound}
    O\parentheses{
    \frac{\log \Delta}{\log\log\Delta}
    +
    \parentheses{\frac{f^2}{\eps}}^{1/\gamma} \cdot \log\log \Delta
    }
\end{align}
\end{theorem}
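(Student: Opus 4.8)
The plan is to invoke Theorem~\ref{thm:iterations}, which already gives the iteration count as $O\parentheses{\log_\alpha \Delta + f\cdot \frac{\alpha}{\beta}}$ for any fixed $\alpha>1$, and then simply substitute the two-case definition of $\alpha$ from Eq.~\ref{eq:alpha} and bound each term. Since each iteration costs a constant number of rounds, it suffices to bound the iteration count by the expression in Eq.~\ref{eq:improved bound}. First I would recall that $\beta=\eps/(f+\eps)=\Theta(\eps/f)$, so that $f/\beta=\Theta(f^2/\eps)$; this is the quantity that the case split in Eq.~\ref{eq:alpha} compares against $\parentheses{\frac{\log\Delta}{\log\log\Delta}}^\gamma$. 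The two cases of the proof correspond exactly to the two branches of the definition of $\alpha$.

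In the first case, where $\frac{f}{\beta}<\parentheses{\frac{\log\Delta}{\log\log\Delta}}^{\gamma}$, we have $\alpha=\parentheses{\frac{\log\Delta}{\log\log\Delta}}^{1-\gamma}$. For the first term, $\log_\alpha\Delta=\frac{\log\Delta}{\log\alpha}$; since $\log\alpha=(1-\gamma)\logp{\frac{\log\Delta}{\log\log\Delta}}=\Theta(\log\log\Delta)$ for constant $\gamma$, this gives $\log_\alpha\Delta=O\parentheses{\frac{\log\Delta}{\log\log\Delta}}$. For the second term, the case hypothesis gives $f\cdot\frac{\alpha}{\beta}=\alpha\cdot\frac{f}{\beta}<\parentheses{\frac{\log\Delta}{\log\log\Delta}}^{1-\gamma}\cdot\parentheses{\frac{\log\Delta}{\log\log\Delta}}^{\gamma}=\frac{\log\Delta}{\log\log\Delta}$, so both terms are absorbed into the first summand of the target bound.

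In the second case, where $\frac{f}{\beta}\geq\parentheses{\frac{\log\Delta}{\log\log\Delta}}^{\gamma}$, we set $\alpha=2$. Then $\log_\alpha\Delta=\log_2\Delta=O(\log\Delta)$, and I would argue that the case hypothesis forces $\log\Delta$ to be small relative to $f/\beta$: rearranging $\parentheses{\frac{\log\Delta}{\log\log\Delta}}^{\gamma}\le \frac{f}{\beta}$ yields $\frac{\log\Delta}{\log\log\Delta}\le\parentheses{\frac{f}{\beta}}^{1/\gamma}$, whence $\log\Delta\le\parentheses{\frac{f}{\beta}}^{1/\gamma}\cdot\log\log\Delta=O\parentheses{\parentheses{\frac{f^2}{\eps}}^{1/\gamma}\cdot\log\log\Delta}$, using $f/\beta=\Theta(f^2/\eps)$. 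The second term $f\cdot\frac{\alpha}{\beta}=2f/\beta=\Theta(f^2/\eps)$ is dominated by $\parentheses{\frac{f^2}{\eps}}^{1/\gamma}$ since $1/\gamma>1$, so again everything fits into the second summand of Eq.~\ref{eq:improved bound}.

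The main obstacle I anticipate is bookkeeping the constant factors and the direction of the inequalities in the case analysis rather than any conceptual difficulty: the crux is correctly using the threshold condition to trade the exponent $\gamma$ against $1-\gamma$ in the first case, and to convert the ``$\alpha=2$ so $\log_\alpha\Delta$ is large'' term into something controlled by $\parentheses{f^2/\eps}^{1/\gamma}$ in the second case via the hypothesis that $\log\Delta/\log\log\Delta$ is bounded. I would want to double-check that $\log\log\Delta>0$ (guaranteed by the assumption $\Delta\ge 3$) so that dividing by it is legitimate, and that $\gamma$ being a constant in $(0,1)$ keeps $\log\alpha=\Theta(\log\log\Delta)$ in the first case. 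A minor note is that the theorem statement writes $\gamma\in(1,0)$, which I read as a typo for $\gamma\in(0,1)$ as used everywhere else.
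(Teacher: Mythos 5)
Your proposal is correct and follows essentially the same route as the paper's proof: invoke Theorem~\ref{thm:iterations} and bound $\log_\alpha\Delta$ and $f\cdot\alpha/\beta$ separately in each of the two branches of Eq.~\ref{eq:alpha}, using the case hypothesis to trade the exponents $\gamma$ and $1-\gamma$. Your added details (the explicit computation $\log\alpha=\Theta(\log\log\Delta)$, the identity $f/\beta=\Theta(f^2/\eps)$, and the typo note on $\gamma\in(1,0)$) are all accurate and only make the argument more complete.
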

\begin{proof}
By Theorem~\ref{thm:iterations}, the number of iterations is bounded by $\log_\alpha \Delta + f\cdot \frac{\alpha}{\beta}$.
We \mbox{consider two cases.}
\begin{enumerate}
    \item Suppose that 
$\frac{f}{\beta}< \parentheses{\frac{\log \Delta}{\log\log \Delta}}^{\gamma}$.
In this case, $\alpha=\parentheses{
\frac{\log \Delta}{\log\log\Delta}}^{(1-\gamma)}$.
The terms in the bound on the number of iterations satisfy:
\begin{align*}
    \log_\alpha \Delta &= O \parentheses{\frac{\log \Delta}{\log \log \Delta}}\\
    f\cdot\frac{\alpha}{\beta}&\leq \frac{\log \Delta}{\log \log \Delta}\;.
\end{align*}

\item Suppose that
$\frac{f}{\beta}\geq  \parentheses{\frac{\log \Delta}{\log\log \Delta}}^{\gamma}$.
In this case $\alpha=2$, and hence
\begin{align*}
    \log_\alpha \Delta &\leq \parentheses{\frac{f}{\beta}}^{1/\gamma} \cdot \log \log \Delta\\
    f\cdot\frac{\alpha}{\beta}&\leq O
    \parentheses{\parentheses{\frac{f}{\beta}}^{1/\gamma}}\;.
\end{align*}
\end{enumerate}
In both cases, the bound on the number of iterations is bounded by the expression in Eq.~\ref{eq:improved bound}, and the theorem follows.
\end{proof}
\end{document}